\documentclass[aps,prl,reprint,superscriptaddress,amsmath,amssymb,bbm,floatfix]{revtex4-2}

\usepackage[utf8]{inputenc}
\usepackage[T1]{fontenc}
\usepackage{color}
\usepackage{amstext}
\usepackage{graphicx}
\usepackage[colorlinks, allcolors={blue}]{hyperref}
\usepackage{physics}
\usepackage{blindtext}
\usepackage{pifont}
\usepackage{xcolor,graphicx}
\usepackage{newlfont}
\usepackage{amssymb,amsmath,mathrsfs,amsthm}
\usepackage{verbatim}
\usepackage{bbm}
\usepackage{multirow}
\usepackage[varg]{txfonts}
\usepackage{quantikz}

\newcommand{\orcid}[1]{\href{https://orcid.org/#1}{\includegraphics[width=7pt]{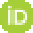}}}

\newcommand{\mc}[1]{\mathcal{#1}}

\renewcommand\bra[1]{{\langle{#1}|}}
\renewcommand\ket[1]{{|{#1}\rangle}}
\renewcommand{\Tr}{\text{Tr}}
\newcommand{\I}{\mathbbm{1}}

\newcommand{\sgn}{\mathrm{sgn}}

\newtheorem{theorem}{Theorem}

\theoremstyle{remark}

\allowdisplaybreaks

\begin{document}

\title{Robust self-test of the maximally entangled state of two-qubits without assuming unitary observables}

\author{Alexandre C. Orthey, Jr.\orcid{0000-0001-8111-3944}}
\email{alexandre.orthey@gmail.com}
\affiliation{Faculty of Mathematics, Informatics and Mechanics, University of Warsaw, ulica Banacha 2, 02-097 Warsaw, Poland}
\affiliation{Instituto de F\'isica Te\'orica, Universidade Estadual Paulista, Rua Dr. Bento Teobaldo Ferraz, 271,
Bloco II, CEP 01140-070 S\~ao Paulo, S\~ao Paulo, Brazil}

\author{Magdalena Stobi\'nska-Moretto\orcid{0000-0002-5168-433X}}
\affiliation{Faculty of Mathematics, Informatics and Mechanics, University of Warsaw, ulica Banacha 2, 02-097 Warsaw, Poland}

\date{\today}

\begin{abstract}
  Standard device-independent self-testing uses Naimark dilation to assume projective measurements, masking the operational limitations of realistic non-unitary observables. We establish a robust pure self-test for the singlet and Pauli observables that entirely circumvents dilation of the measurement apparatus. Assuming a pure state to model an untrusted source, we regularize the physical non-projective operators and derive an analytic $\mathcal{O}(\sqrt{\epsilon})$ robustness bound. Our results suggest that device-independent certification of real implementations is significantly more demanding than standard projective models imply.
\end{abstract}

\maketitle

\textit{Introduction.---}The development of quantum technologies requires rigorous certification methods \cite{Eisert2020Quantum}. Among these, device-independent (DI) certification stands out as the strongest form of verification, allowing for the characterization of quantum systems based solely on the classical input-output statistics of an experiment, without requiring any assumptions about the internal workings of the devices \cite{Brunner2014Bell}. The most powerful DI certification technique is known as \textit{self-testing}, originally introduced by Mayers and Yao \cite{Mayers2004Self}, which guarantees that up to local isometries and extra degrees of freedom, a specific quantum state and set of measurements are the only ones capable of producing a given maximal violation of a Bell inequality. An excellent review on the subject was done by \v{S}upi\'{c} and Bowles \cite{Supic2020Sep}.

Over the years, self-testing has been extensively developed and applied to a wide variety of pure entangled states and projective measurements \cite{Reichardt2013Apr,McKague2011TQC,Wu2014Oct,Wang2016Feb,Supic2016Mar,Coladangelo2017May,Kaniewski2019Oct,Mancinska2024Sep,Tavakoli2021Feb,Sarkar2021Oct,Sarkar2022Mar,Supic2023Feb,Chen2024Oct,Sarkar2025May,Balanzo-Juando2026Mar}. Because realistic experiments never achieve perfect maximal violations due to unavoidable noise, standard frameworks have been extended to \textit{robust self-testing} \cite{McKague2012Oct, Yang2013May,Bamps2015May,Kaniewski2016Analytic,Kaniewski2020Sep,Sarkar2025Mar}, which bounds the distance between the physical implementation and the ideal reference experiment as a function of the observed experimental error $\epsilon$.

Despite these advances, a significant gap remains in the theory: the vast majority of robust self-testing proofs rely on the assumption that the underlying measurements are strictly projective, meaning that the binary observables are unitary and square to the identity. While the Naimark dilation theorem mathematically allows any arbitrary Positive Operator-Valued Measure (POVM) to be treated as a projective measurement in an enlarged Hilbert space, applying it to uncharacterized local measurements misrepresents the experimental setup. As recently highlighted by Baptista et al. \cite{Baptista2025Nov}, the Naimark isometry does not directly self-test the original non-projective operators, and the necessary dilating space may physically reside in the environment or be held by an adversary. This limitation is evident in Ref. \cite{Sarkar2025Mar} and, consequently, in Theorem 4 of Ref. \cite{Sarkar2026Mar}, where the authors must rely on the Naimark dilation to robustly self-test measurements in a star network. Alternatively, Tavakoli et al. \cite{Tavakoli2020Apr} avoid dilation by considering a prepare-and-measure scenario with a strictly bounded Hilbert space dimension, making it a semi-DI protocol. Therefore, deriving explicit, robust self-testing bounds without assuming the unitarity of observables within a fully DI standard bipartite Bell scenario remains a major challenge. 

In this work, we address this gap by establishing a robust self-testing proof for the maximally entangled state of two qubits and Pauli observables in a Bell scenario that entirely circumvents Naimark dilation for the measurements. We maintain the standard DI purification of the shared state to account for an untrusted source, but we choose not to dilate the measurement operators, ensuring our certification captures the true practical cost of the physical local devices. To achieve this, we introduce regularized unitary observables via a modified sign function, addressing the non-anti-commutativity issues previously identified in Ref. \cite{Bamps2015May}. This rigorous regularization allows us to invoke the seminal Theorem of McKague et al. \cite{McKague2012Oct} on the \textit{``robust self-test of the singlet''}. We then split the total error into two parts: the usual unitary robustness term and an additional one that explicitly quantifies the deviation of the physical POVMs from their ideal projective targets. Finally, we derive an analytic robustness bound scaling as $\mathcal{O}(\sqrt{\epsilon})$, providing a faithful DI certification theorem for realistic experimental setups.

Applying this analytical bound has direct consequences for realistic implementations, especially in photonic Bell experiments with sub-optimal detection efficiency. As analyzed by Vivoli et al. \cite{Vivoli2015Comparing} (see also Ref. \cite{Moradi2026Jun}), the intrinsic properties of setups based on on-off measurements (described as $\I-2\ket{0}\bra{0}$) limit the maximum achievable Clauser–Horne–Shimony–Holt (CHSH) inequality violation to $2\sqrt{2}-\epsilon$ with $\epsilon\approx 0.138$. For such a high error rate, standard analytical bounds are uninformative. While standard numerical optimization techniques like the Navascu\'es-Pironio-Ac\'in (NPA) hierarchy successfully bound quantum correlations, they rely on the algebraic assumption that observables are strictly projective and, because of that, they invoke Naimark dilation \cite{Navascues2008Jul, Pironio2010, Bancal2011Aug, Bamps2015May, Supic2020Sep}. Evaluating our analytical bounds for $\epsilon = 0.138$ reveals that the true cost of using genuine POVMs is severe. Our analysis demonstrates that self-testing projective measurements with actual non-projective implementations is significantly more demanding than idealized projective models imply. Consequently, achieving meaningful DI certification for these photon-number schemes requires either modifying the experimental setup to push the violation closer to the ideal $2\sqrt{2}$ limit, or developing novel numerical self-testing techniques that operate directly on POVMs without relying on Naimark dilation.

\textit{Fundamental concepts.---}Let us revisit the notion of self-testing \cite{Supic2020Sep} in a standard Bell test with two parties, Alice and Bob, each having two inputs and two outputs. In this scenario, Alice and Bob share an unknown bipartite quantum state $\rho_\mathcal{AB}$ acting on $\mathcal{H}_\mathcal{A}\otimes\mathcal{H}_\mathcal{B}$ and perform unknown local measurements $\{A_x\}$ and $\{B_y\}$, where $x$ and $y$ label the measurement choices. The only information available to the experimenters is the observed probability distribution $\mathbf{p}=\{p(a,b|x,y)\}$, which we call the \textit{behavior}.

Using Born's rule, we can write
\begin{equation}
  p(ab|xy)=\Tr \left(A_{a|x}\otimes B_{b|y}\rho_\mathcal{AB}\right),
\end{equation}
where $A_{a|x}$ and $B_{b|y}$ are the measurement elements satisfying $\sum_a A_{a|x}=\mathbbm{1}$ and $\sum_b B_{b|y}=\mathbbm{1}$ for every measurement choice $x$ and $y$.

Since the dimensions of the local Hilbert spaces $\mathcal{H}_\mathcal{A}$ and $\mathcal{H}_\mathcal{B}$ are not specified, we may, without loss of generality, assume that the shared state is pure, $\rho_\mathcal{AB}=\ket{\psi_\mathcal{AB}}\!\bra{\psi_\mathcal{AB}}$. This follows from standard quantum state purification, which simply dilates the state into a larger Hilbert space to account for an untrusted source, without affecting the nature of the local operators. Also, in standard robust self-testing, it is customary to appeal to Naimark's theorem and treat the physical measurements as strictly projective, i.e. $A_x^2 = \I, B_y^2 = \I$ for binary outcomes.

Next, consider a \textit{reference experiment} that reproduces the same behavior $\mathbf{p}$. In this reference realization, Alice and Bob share a known pure state $\ket{\psi'_\mathcal{A'B'}}\in\mathcal{H}_\mathcal{A'}\otimes\mathcal{H}_\mathcal{B'}$ of known dimension, and perform known local extremal measurements $\{A_x'\}$ and $\{B_y'\}$. The goal of self-testing is to infer, solely from the observed correlations $\mathbf{p}$, that the \textit{actual} experiment is equivalent to the reference one in the following sense:
\begin{enumerate}
  \item[(i)] The local Hilbert spaces admit a tensor-product decomposition $\mathcal{H}_\mathcal{A}=\mathcal{H}_\mathcal{A'}\otimes\mathcal{H}_\mathcal{A''}$ and $\mathcal{H}_\mathcal{B}=\mathcal{H}_\mathcal{B'}\otimes\mathcal{H}_\mathcal{B''}$, for some auxiliary Hilbert spaces $\mathcal{H}_\mathcal{A''}$ and $\mathcal{H}_\mathcal{B''}$.
  \item[(ii)] There exist local isometries $U_\mathcal{A}:\mathcal{H}_\mathcal{A}\to\mathcal{H}_\mathcal{A'}\otimes\mathcal{H}_\mathcal{A''}$ and $U_\mathcal{B}:\mathcal{H}_\mathcal{B}\to\mathcal{H}_\mathcal{B'}\otimes\mathcal{H}_\mathcal{B''}$,
    such that for $U = U_\mathcal{A}\otimes U_\mathcal{B}$, the state satisfies
    \begin{equation}\label{self-testing_state}
      U\ket{\psi_\mathcal{AB}}
      =\ket{\psi'_\mathcal{A'B'}}\otimes\ket{\xi_\mathcal{A''B''}},
    \end{equation}
    where $\ket{\xi_\mathcal{A''B''}}\in\mathcal{H}_\mathcal{A''}\otimes\mathcal{H}_\mathcal{B''}$ is some auxiliary state, and the joint measurements act on the physical state such that
    \begin{equation}\label{self-testing_measurements}
      U (A_x \otimes B_y) \ket{\psi_\mathcal{AB}} = (A_x' \otimes B_y') \ket{\psi'_\mathcal{A'B'}} \otimes \ket{\xi_\mathcal{A''B''}},
    \end{equation}
    for every $x$ and $y$.
\end{enumerate}
If conditions (i) and (ii) are satisfied, we say that the reference state $\ket{\psi'_\mathcal{A'B'}}$ and the reference measurements $\{A_x',B_y'\}$ are \textit{self-tested} by the behavior $\mathbf{p}$. %It is crucial to define this measurement equivalence via its action on the state rather than as a strict global operator equivalence, such as $U A_x U^\dagger = A'_x \otimes \I$. Because unitary transformations preserve the spectrum of an operator, global equivalence would mathematically force the physical observables to be strictly projective. Defining it via the state allows the physical $A_x$ to be a non-unitary POVM, provided its action on the support of the state mimics the ideal reference.

It is important to note that this definition of self-testing holds up to a global complex conjugation of the reference state and measurements. This ambiguity is unavoidable because the behavior $\mathbf{p}$ is invariant under complex conjugation of both the state and the measurement operators \cite{Supic2020Sep} (see also Ref. \cite{Chen2025Dec} for a deeper discussion on the role of complex numbers in self-testing).

To check whether a certain behavior $\mathbf{p}$ implies conditions (i) and (ii), we can use a (non-trivial) Bell-type inequality $\expval{\mathfrak{B}}\leqslant \beta_C$, where $\beta_C$ is its classical bound and the Bell operator $\mathfrak{B}$ is written in terms of the measurements that can be performed by Alice and Bob,
\begin{equation}
  \mathfrak{B}\coloneqq \sum_{i,j} \gamma_{ij}A_i\otimes B_j,
\end{equation}
with suitable coefficients $\gamma_{ij}$. The inequality $\expval{\mathfrak{B}}\leqslant \beta_C$ is tailored such that it can only be maximally violated by the reference measurements and state, up to local isometries and extra degrees of freedom. Thus, the self-testing statement is a theorem guaranteeing that if the quantum bound $\beta_Q>\beta_C$ is achieved, then conditions (i) and (ii) hold.

\textit{Revisiting robust self-testing.---}It might happen that the behavior $\mathbf{p}$ does not lead to the maximal violation of the Bell inequality $\expval{\mathfrak{B}}\leqslant \beta_C$, but instead falls short by some $\epsilon>0$, that is,
\begin{equation}\label{robust_I}
  \beta_Q>\expval{\mathfrak{B}}\geqslant \beta_Q -\epsilon,
\end{equation}
such that $\beta_Q-\epsilon > \beta_C$. In that case, one may prove the robustness of the tailored Bell inequality. We say that a behavior $\mathbf{p}$ robustly self-tests the reference measurements $\{A_x',B_y'\}$ and the reference state $\ket{\psi_\mathcal{A'B'}'}$ if, given an $\epsilon>0$ such that \eqref{robust_I} is true, there exists a function $f:\epsilon\mapsto f(\epsilon)$ and local isometries $U_\mathcal{A}$ and $U_\mathcal{B}$ such that
\begin{equation}\label{norm_robust}
  \norm{U\left(A_x\otimes B_y\right)\ket{\psi_\mathcal{AB}}-A_x'\otimes B_y'\ket{\psi_\mathcal{A'B'}}\otimes \ket{\xi_\mathcal{A''B''}}}<f(\epsilon),
\end{equation}
where $U=U_\mathcal{A}\otimes U_\mathcal{B}$ and $\lim_{\epsilon\to 0}f(\epsilon)=0$. For alternative formulations of robust self-testing, see Refs. \cite{Yang2013May,Wu2014Oct,Kaniewski2020Sep,Baccari2020Jan}.

A cornerstone result in this direction is Theorems 1 and 2 by McKague, Yang, and Scarani \cite{McKague2012Oct}. Adapted to our notation, the theorems state that if Alice and Bob perform binary measurements with strictly unitary and Hermitian observables (i.e., $A_x^2 = \I $ and $B_y^2 = \I $), one can explicitly bound the distance between the physical implementation and the ideal reference. Specifically, suppose there exist error terms $\epsilon_1$ and $\epsilon_2$ such that the observables' anti-commutators satisfy
\begin{equation}\label{eq:mckague_eps1}
  \frac{1}{2}\max\left\{\,\norm{\left\{A_0,A_1 \right\}\ket{\psi_{\mc{AB}}}}, \norm{\left\{B_0,B_1 \right\}\ket{\psi_{\mc{AB}}}}\,\right\} \leqslant \epsilon_1,
\end{equation}
and the action of Alice's and Bob's operators on the state satisfies the cross-term bounds
\begin{equation}\label{eq:mckague_eps2}
  \max_{x} \norm{(A_x \otimes \I  - \I  \otimes B_x)\ket{\psi_{\mc{AB}}}} \leqslant \epsilon_2.
\end{equation}
Then, there exists a local isometry $U = U_{\mc{A}} \otimes U_{\mc{B}}$ and an auxiliary state $\ket{\xi_{\mc{A''B''}}}$ such that,
\begin{align}\label{eq:mckague_bound}
  &\norm{U\left(A_x\otimes B_y\right)\ket{\psi_{\mc{AB}}}  -A_x'\otimes B_y'\ket{\phi_\mc{A'B'}^+}\otimes \ket{\xi_\mc{A''B''}}} \nonumber \\
  &\leqslant f_\mathrm{unitaries}(\epsilon_1, \epsilon_2),
\end{align}
where $\ket{\phi_\mc{A'B'}^+}$ is the ideal maximally entangled state, $A_x', B_y'$ are the ideal reference Pauli observables, and the robustness bound evaluates to
\begin{equation}\label{f_unitaries_main}
  f_\mathrm{unitaries}(\epsilon_1, \epsilon_2) = \frac{11\epsilon_1 + 5\epsilon_2}{2}.
\end{equation}

The fundamental limitation of this theorem (and similar standard robust self-testing frameworks) is its strict reliance on the unitarity of the physical observables. The existence of the isometry $U$ and the algebraic derivation of the bound $f_\mathrm{unitaries}$ intrinsically assume that the measurements are strictly projective. When dealing with general binary POVMs, the physical operators are non-projective and only satisfy $A_x^2 \leqslant \I $ and $B_y^2 \leqslant \I $. Consequently, one cannot directly apply McKague et al.'s theorem to certify the actual physical POVMs performed in the laboratory. Although one can mathematically use Naimark’s dilation theorem to make the operators projective in a larger Hilbert space, doing this hides the actual local measurement devices. By artificially forcing the observables to act projectively with the aid of a hidden auxiliary space that may not exist in the laboratory setup, Naimark dilation completely disguises the operational error of utilizing realistic non-unitary POVMs. Overcoming this limitation by explicitly refusing Naimark dilation for the measurements, and directly quantifying the true deviation of the physical POVMs from their ideal projective targets is the primary focus of the following results section.

\textit{Results.---}To establish a robust self-testing statement that explicitly incorporates the effects of non-projective measurements, we begin by considering the CHSH Bell operator \cite{Clauser1969Oct}, which is given by
\begin{equation}\label{CHSH_operator}
  \mathfrak{B}\coloneqq A_0\otimes B_0 + A_0\otimes B_1 + A_1\otimes B_0 -A_1\otimes B_1.
\end{equation}
Here, the local physical observables $A_x$ and $B_y$ (for $x,y\in\{0,1\}$) are Hermitian operators originating from binary POVMs, i.e., $A_x = M_{+1|x}-M_{-1|x}$ and $B_y = N_{+1|y}-N_{-1|y}$, such that $\sum_{j=0}^{1}M_{j|i}=\sum_{j=0}^{1}N_{j|i}=\I $, $M_{j|i}\geqslant 0$, $N_{j|i}\geqslant 0$. Because we explicitly refuse to invoke Naimark dilation, we do not assume these physical observables are unitary; instead, they satisfy $A_x^2\leqslant \I$ and $B_y^2\leqslant \I$.

By defining Bob's tilted observables as
\begin{equation}
  \widetilde{B}_0 \coloneqq \frac{B_0+B_1}{\sqrt{2}}\quad\textrm{and}\qquad \widetilde{B}_1\coloneqq \frac{B_0-B_1}{\sqrt{2}},\label{BobsTilted}
\end{equation}
and introducing the operators $P_i \coloneqq \I\otimes\I-A_i\otimes \widetilde{B}_i$, we can consider the SOS decomposition associated with the CHSH operator:

\begin{align}\label{SOS_decomposition}
  \frac{1}{\sqrt{2}}\sum_{i=0}^{1} P_i^2
  &=\sqrt{2}\I\otimes \I -\mathfrak{B}+\frac{\sqrt{2}}{4}\left(A_0^2+A_1^2\right)\otimes\left(B_0^2+B_1^2\right)\nonumber\\
  &\quad+\frac{\sqrt{2}}{4}\left(A_0^2-A_1^2\right)\otimes\left\{B_0,B_1\right\}.
\end{align}
When considering projective measurements, the right-hand-side of the above becomes simply $2\sqrt{2}\I\otimes \I -\mathfrak{B}$, which immediately results in the Tsirelson's bound since $P_i^2\geqslant 0$. In our case, however, the last term in Eq.~\eqref{SOS_decomposition} is not null and has no definite sign for general non-projective observables. We therefore obtain the standard quantum bound and the estimates on their non-unitarity directly from the Cauchy-Schwarz inequality. For instance, Cauchy-Schwarz gives $\left|\expval{\mathfrak{B}}\right|^2\leqslant 4\expval{\left(A_0^2+A_1^2\right)\otimes\I}\leqslant 8$, and hence $\mathfrak{B}\leqslant 2\sqrt{2}\I\otimes\I$.

Throughout this study, we do not assume that the reduced states of $\ket{\psi_\mathcal{AB}}$ on both subsystems are full rank, but we do assume that the shared state is pure. Since we also do not assume that the measurements are projective, that configures our self-test as a \textit{pure self-test}, following the terminology defined in Ref. \cite{Baptista2025Nov}. Having established this framework, we present our main result.

\begin{theorem}
  Let $A_x$ acting on $\mathcal{H_A}$ and $B_y$ acting on $\mathcal{H_B}$ be Alice's and Bob's quantum binary observables, respectively, in a CHSH scenario such that $A_x^2\leqslant \I$ and $B_y^2\leqslant \I$ for $x,y\in\{0,1\}$. Let $\ket{\psi_\mathcal{AB}}$ be the bipartite pure quantum state shared by them. If $\expval{\mathfrak{B}}=2\sqrt{2}-\epsilon$, then $\mathcal{H_A}=\mathcal{H}_\mathcal{A'}\otimes \mathcal{H}_\mathcal{A''}$, $\mathcal{H_B}=\mathcal{H}_\mathcal{B'}\otimes \mathcal{H}_\mathcal{B''}$, and there exists an isometry $U=U_\mathcal{A}\otimes U_\mathcal{B}$ and an auxiliary state $\ket{\xi_\mathcal{A''B''}}$ such that
  \begin{equation}
    \norm{U\left(A_x\otimes \widetilde{B}_y\right)\ket{\psi_\mathcal{AB}}  -A_x'\otimes B_y'\ket{\phi_\mathcal{A'B'}^+}\otimes \ket{\xi_\mathcal{A''B''}}}\leqslant C\sqrt{\epsilon},
  \end{equation}
  for $x,y\in\{0,1,2\}$, where
  \begin{align}
    & A_0'=B_0'=\sigma_x,\\
    & A_1'=B_1'=\sigma_z,\\
    & A_2=B_2=A_2'=B_2'=\I,
  \end{align}
  and $C=(179+53\sqrt{2})/2^\frac{3}{4}$.
\end{theorem}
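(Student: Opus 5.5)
The plan is to reduce to McKague et al.'s theorem (Theorems 1 and 2 with $f_\mathrm{unitaries}=(11\epsilon_1+5\epsilon_2)/2$) by replacing the physical observables with regularized unitaries, and then to pay separately for the distance between each physical POVM and its regularization on the state. The first step extracts state-dependent estimates from the SOS decomposition \eqref{SOS_decomposition}. Taking expectations, $\frac{1}{\sqrt2}\sum_i\norm{P_i\ket{\psi}}^2=\sqrt2-\expval{\mathfrak B}+\frac{\sqrt2}{4}\expval{(A_0^2+A_1^2)\otimes(B_0^2+B_1^2)}+\frac{\sqrt2}{4}\expval{(A_0^2-A_1^2)\otimes\{B_0,B_1\}}$. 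I would not use this directly, since the last term has no definite sign. Instead I would run the Cauchy--Schwarz chain $2\sqrt2-\epsilon=\expval{\mathfrak B}\le \sum_x\norm{A_x\ket\psi}\,\norm{\sqrt2\,\widetilde B_x\ket\psi}$, together with $\norm{\widetilde B_0\ket\psi}^2+\norm{\widetilde B_1\ket\psi}^2=\expval{B_0^2+B_1^2}\le 2$. Near-saturation of each inequality then yields:
\begin{itemize}
\item $\expval{\I-A_x^2}=O(\epsilon)$ and $\expval{\I-\widetilde B_x^2}=O(\epsilon)$;
\item $\norm{(A_x\otimes\I-\I\otimes\widetilde B_x)\ket\psi}=O(\sqrt\epsilon)$, obtained from $\norm{(A_x-\widetilde B_x)\psi}^2=\expval{A_x^2}+\expval{\widetilde B_x^2}-2\expval{A_x\widetilde B_x}$;
\item $\norm{\{\widetilde B_0,\widetilde B_1\}\ket\psi}$ small, since $\{\widetilde B_0,\widetilde B_1\}=B_0^2-B_1^2$ and both squares are $\epsilon$-close to $\I$ on the state. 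This is where the non-anticommutativity issue of Bamps--Pironio is handled.
\end{itemize}
Transferring $\widetilde B_x$ to Alice's side via the cross term then gives a bound on $\norm{\{A_0,A_1\}\ket\psi}$.

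The second step is the regularization. I define $\hat A_x=\sgn(A_x)$ and $\hat B_x=\sgn(\widetilde B_x)$, using the modified sign function with $\sgn(0)=1$ so that the result is a Hermitian unitary. These are genuine local unitaries, so McKague et al.'s theorem applies to them with the same state $\ket\psi$. The key inequality is the operator bound $(\sgn(A)-A)^2=(\I-|A|)^2\le \I-A^2$ for $-\I\le A\le\I$. From it, $\norm{(\hat A_x-A_x)\ket\psi}\le\sqrt{\expval{\I-A_x^2}}=O(\sqrt\epsilon)$, and likewise for Bob. Using this, I would bound $\epsilon_1$ for the hatted operators by expanding $\{\hat A_0,\hat A_1\}\psi$ around $\{A_0,A_1\}\psi$. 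Here a step such as $\hat A_0(\hat A_1-A_1)\psi$ uses $\norm{\hat A_0}=1$, while $(\hat A_0-A_0)A_1\psi$ requires care, because $A_1\psi$ is not the state; I would bound it by commuting through to Bob via $A_1\psi\approx\widetilde B_1\psi$. The bound on $\epsilon_2$ follows by the triangle inequality. The result is $\epsilon_1,\epsilon_2=O(\sqrt\epsilon)$ with explicit constants.

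The third step is assembly. McKague's theorem gives $U$ and $\ket\xi$ with $\norm{U(\hat A_x\otimes\hat B_y)\psi-A'_x\otimes B'_y\phi^+\otimes\xi}\le f_\mathrm{unitaries}$. The triangle inequality then gives
\[
\norm{(A_x\otimes\widetilde B_y-\hat A_x\otimes\hat B_y)\psi}\le\norm{(A_x-\hat A_x)\otimes\widetilde B_y\,\psi}+\norm{\hat A_x\otimes(\widetilde B_y-\hat B_y)\psi}.
\]
The first term is handled by moving $\widetilde B_y$ across using $\norm{\widetilde B_y}\le\sqrt2$, which is where the $\sqrt2$ factors in $C$ come from. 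The cases with index $2$, where the operator is $\I$, are the special cases with fewer terms. Collecting the constants should yield $C=(179+53\sqrt2)/2^{3/4}$. Note also that the claimed tensor decomposition of $\mathcal H_{\mathcal A}$ is really realized through the isometry's ancilla, as in McKague et al.

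The main obstacle is the first and second steps together. The error terms must be controlled on vectors such as $A_1\ket\psi$ rather than on $\ket\psi$ itself, and the indefinite term $(A_0^2-A_1^2)\otimes\{B_0,B_1\}$ must not spoil the $O(\epsilon)$ estimates on $\expval{\I-A_x^2}$. The first thing to try is to avoid the SOS entirely and derive everything from the Cauchy--Schwarz saturation, always routing operators across to the other party before applying the sign-function bound.
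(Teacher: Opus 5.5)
Your architecture is the paper's: sign-function regularization, McKague et al.\ applied to the regularized operators, a triangle-inequality split into $\Delta$ plus $f_\mathrm{unitaries}$, and routing $(\hat A_0-A_0)A_1\ket{\psi}$ through $A_1\ket{\psi}\approx\widetilde B_1\ket{\psi}$. However, \textbf{the regularization bound for Bob's tilted observables fails as written}, for two reasons.

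\emph{The operator inequality does not apply.} The inequality $(\sgn(A)-A)^2\le\I-A^2$ requires $-\I\le A\le\I$. But $\widetilde B_x=(B_0\pm B_1)/\sqrt2$ only satisfies $\norm{\widetilde B_x}_\infty\le\sqrt2$.

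\emph{Your bullet $\expval{\I-\widetilde B_x^2}=O(\epsilon)$ is false.} Near-saturation only controls the sum, $\sum_x\expval{\I-\widetilde B_x^2}=2-\expval{B_0^2+B_1^2}=O(\epsilon)$. Individually, $\expval{\I-\widetilde B_{0,1}^2}=\tfrac12\expval{(\I-B_0^2)+(\I-B_1^2)}\mp\tfrac12\expval{\{B_0,B_1\}}$, and $\expval{\{B_0,B_1\}}$ is generically of order $\sqrt\epsilon$. Concretely, take $\ket{\phi^+}$ with $xz$-plane projective qubit observables: $B_0,B_1$ with Bloch vectors at angle $\pi/2+\eta$, and $A_x$ along $\vec b_0\pm\vec b_1$. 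Then $\expval{\mathfrak B}=2\sqrt2\cos(\eta/2)$, so $\eta\approx 2^{3/4}\sqrt\epsilon$. Meanwhile $\widetilde B_0^2=(1-\sin\eta)\I$ and $\widetilde B_1^2=(1+\sin\eta)\I$. Your bound $\sqrt{\expval{\I-\widetilde B_x^2}}$ is then of order $\epsilon^{1/4}$ for $x=0$ and meaningless for $x=1$.

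Since $\norm{(\hat B_x-\widetilde B_x)\ket{\psi}}$ enters $\Delta$, $\epsilon_2$ and the regularized Bob anticommutator, the $O(\sqrt\epsilon)$ conclusion is not established. The same sum-versus-individual issue in your cross-term bullet is harmless, because summing over $x$ gives $\sum_x\norm{(A_x-\widetilde B_x)\ket{\psi}}^2\le\sqrt2\,\epsilon$ directly.

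The missing idea is a vector-norm bound valid on the whole spectrum. For every real $\lambda$, $|\sgn^*(\lambda)-\lambda|=|1-|\lambda||\le|1-\lambda^2|$, hence $\norm{(\hat B_x-\widetilde B_x)\ket{\psi}}\le\norm{(\I-\widetilde B_x^2)\ket{\psi}}$. You then need the \emph{vector} $(\I-\widetilde B_x^2)\ket{\psi}$ to be $O(\sqrt\epsilon)$. By the expansion above, this requires $\norm{\{B_0,B_1\}\ket{\psi}}=O(\sqrt\epsilon)$ for the \emph{untilted} anticommutator $\{B_0,B_1\}=\widetilde B_0^2-\widetilde B_1^2$. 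That is not the tilted one, $\{\widetilde B_0,\widetilde B_1\}=B_0^2-B_1^2$, which your bullet controls.

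There are two ways to get it. The paper applies the parallelogram law to $A_0\widetilde B_0\ket{\psi}\pm A_1\widetilde B_1\ket{\psi}$, together with $\norm{(A_x-\widetilde B_x)\ket{\psi}}\le 2^{1/4}\sqrt\epsilon$. Within your own routing strategy, you can instead write $\I-\widetilde B_x^2=(\I-A_x^2)+A_x(A_x-\widetilde B_x)+\widetilde B_x(A_x-\widetilde B_x)$. This gives $\norm{(\I-\widetilde B_x^2)\ket{\psi}}\le\norm{(\I-A_x^2)\ket{\psi}}+(1+\sqrt2)\norm{(A_x-\widetilde B_x)\ket{\psi}}=O(\sqrt\epsilon)$.

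With this repaired, the rest of your plan goes through and essentially coincides with the paper's Steps 2--4. Two of your choices are sound and worth keeping:
\begin{itemize}
\item Your $\hat B_x=\sgn(\widetilde B_x)$ is exactly the operator this estimate applies to. The paper's bound $\norm{(\widetilde B_y-\widetilde{\mathbb{B}}_y)\ket{\psi}}\le\norm{(\I-\widetilde B_y^2)\ket{\psi}}$ is justified only for $\sgn^*(\widetilde B_y)$, not for its written definition $\sgn^*(\mathbb{B}_0\pm\mathbb{B}_1)$.
\item The identity $\{\widetilde B_0,\widetilde B_1\}=B_0^2-B_1^2$ is a valid shortcut for the tilted anticommutator.
\end{itemize}
Finally, your chain of estimates differs from the paper's, so ``collecting the constants should yield $C$'' is unsupported. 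You must actually carry out the bookkeeping and show your constant is at most $C$.
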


\begin{proof}[Sketch of proof]
  The detailed proof is provided in the Supplemental Material. Here we outline the main logical steps.

  \textit{(Step 1. Regularization of Observables)} Because the physical observables are not necessarily projective ($A_x^2 \leqslant \I$, $B_y^2 \leqslant \I$), we cannot directly apply standard robust self-testing bounds. To circumvent this, we introduce regularized unitary observables via the modified sign function, $\mathbb{A}_x \coloneqq \sgn^*(A_x)$ and $\mathbb{B}_y \coloneqq \sgn^*(B_y)$. This function re-assigns non-negative eigenvalues to +1 and negative eigenvalues to -1. Based on these, we define Bob's regularized tilted observables $\widetilde{\mathbb{B}}_0 = \sgn^*(\mathbb{B}_0+\mathbb{B}_1)$ and $\widetilde{\mathbb{B}}_1 = \sgn^*(\mathbb{B}_0-\mathbb{B}_1)$. By construction, these regularized operators are Hermitian and exactly unitary, i.e., $\mathbb{A}_x^2 = \widetilde{\mathbb{B}}_y^2 = \I$.

  \textit{(Step 2. Splitting of the Error)} The strict unitarity of the regularized operators allows us to safely invoke Theorem 1 from McKague et al.~\cite{McKague2012Oct}. This guarantees the existence of a local isometry $U = U_\mathcal{A} \otimes U_\mathcal{B}$ and an auxiliary state $\ket{\xi_\mathcal{A''B''}}$ specifically for the regularized operators. To extract the self-testing error for the actual non-unitary physical observables $A_x$ and $\widetilde{B}_y$, we insert the regularized operators, sum zero, and apply the triangle inequality alongside the unitary invariance of the norm:
  \begin{align}
    & \norm{U\left(A_x\otimes \widetilde{B}_y\right)\ket{\psi_\mathcal{AB}}  -A_x'\otimes B_y'\ket{\phi_\mathcal{A'B'}^+}\otimes \ket{\xi_\mathcal{A''B''}}}\nonumber\\
    & \leqslant \Delta + f_\mathrm{unitaries}(\epsilon_1, \epsilon_2),
  \end{align}
  where the deviation term $\Delta$ is given by
  \begin{equation}\label{definition_delta}
    \Delta\coloneqq \norm{\left(A_x\otimes \widetilde{B}_y-\mathbb{A}_x\otimes\widetilde{\mathbb{B}}_y\right)\ket{\psi_\mathcal{AB}}},
  \end{equation}
  and $f_\mathrm{unitaries}(\epsilon_1, \epsilon_2)$ is calculated from the anti-commutation relations between the regularized observables using Eqs. \eqref{eq:mckague_eps1}, \eqref{eq:mckague_eps2}, and \eqref{f_unitaries_main}.
  
  \textit{(Step 3. Bound for $\Delta$)} Next, we bound $\Delta$ as a function of $\epsilon$. Applying the Cauchy--Schwarz inequality directly to the CHSH expectation value yields $\expval{A_i^2\otimes B_j^2}\geqslant 1-2\sqrt{2}\epsilon$, which allows us to quantify the non-unitarity of each physical observable on the state. After a daunting algebraic manipulation on \eqref{definition_delta}, with the help of the Cauchy-Schwarz inequality, the triangle inequality, and the parallelogram law of norms, we find $\Delta \leqslant 2^\frac{7}{4}\sqrt{\epsilon}$.

  \textit{(Step 4. Bound for $f_\textrm{unitaries}$)} Since the modified sign function ensures that the regularized operators are structurally close to the original ones on the support of the state, we can bound terms such as $\norm{(\mathbb{A}_i - \widetilde{\mathbb{B}}_i) \ket{\psi}}$, $\norm{\left\{\mathbb{A}_0,\mathbb{A}_1 \right\}\ket{\psi}}$, and $\norm{\left\{\widetilde{\mathbb{B}}_0,\widetilde{\mathbb{B}}_1 \right\}\ket{\psi}}$. Finally, we evaluate the bounds on the anti-commutators required for $f_\mathrm{unitaries}$, which is given by \eqref{f_unitaries_main}. We show that $f_\mathrm{unitaries}$ scales analytically as $\mathcal{O}(\sqrt{\epsilon})$, which, upon combining with $\Delta$, yield the overall robustness coefficient $C$.
\end{proof}

\textit{Addressing the elephant in the room.---}To illustrate the physical significance of circumventing Naimark dilation, consider practical photonic Bell experiments based on photon-number measurements, such as those employing threshold detectors. As analyzed by Vivoli et al. \cite{Vivoli2015Comparing}, when utilizing a photon-pair source and local on-off measurements \cite{Moradi2026Jun}, the intrinsic properties of the setup limit the maximum achievable CHSH violation to $\expval{\mathfrak{B}} \approx 2.69$, corresponding to an error of $\epsilon \approx 0.138$.

In standard self-testing paradigms, purely analytical bounds (such as those by McKague et al. \cite{McKague2012Oct}) yield mathematically uninformative state distances for such large error rates, even when falsely assuming the local operators are unitary. To obtain meaningful certification for a $2.69$ violation, it is standard practice to rely on numerical optimization techniques, such as semi-definite programming (SDP). However, these standard numerical hierarchies and extraction methods require the assumption that the observables are strictly projective to algebraically simplify the operator sequences and ensure the convergence of the computational matrices \cite{Navascues2008Jul, Pironio2010, Bancal2011Aug, Bamps2015May, Supic2020Sep}. Consequently, one is forced to apply Naimark dilation, artificially modeling the threshold detectors as projective measurements in an enlarged space.

Because threshold detectors inherently implement non-projective POVMs due to their inability to resolve exact photon numbers, this numerical assumption yields overly optimistic certification bounds. Our explicit analytical derivation forces the protocol to directly confront this physical constraint. By evaluating our bounds for $\epsilon = 0.138$, the $\Delta$ term explicitly reveals that the cost of non-unitary observables is severe, pushing the computed distance far beyond the maximum possible physical limit (the maximum value of the distance norm is 2). Far from being a flaw, this demonstrates that certifying genuine non-projective implementations is vastly more demanding than standard projective bounds imply. Consequently, to achieve non-trivial DI certification for real experiments, such as photon-number schemes, without relying on the Naimark dilation, one must either drastically alter the experimental setup to push the violation closer to $2\sqrt{2}$, or develop a whole new generation of non-projective numerical optimization techniques.

\textit{Alternative approach.---} An alternative approach to circumventing Naimark dilation relies on the theoretical framework recently developed by Baptista \textit{et al.} \cite{Baptista2025Nov}. Instead of regularizing the physical non-projective operators to explicitly derive new bounds, their ``lifting'' theorems provide a mathematical mechanism to promote standard pure self-tests (which assume strictly projective measurements) into completely general self-tests that accommodate arbitrary POVMs. While this framework rigorously proves that the assumption of projectivity can be lifted without loss of generality, applying these generalized theorems introduces a significant quantitative penalty. Their proof depends on transitivity relations that add cumulative square-root error terms. Consequently, applying their method to a standard projective self-test, which typically exhibits an $\mathcal{O}(\sqrt{\epsilon})$ robustness bound with respect to the Bell violation error $\epsilon$, accumulates the errors, resulting in $\mathcal{O}(\epsilon^{1/4})$ scaling. By contrast, our method avoids this accumulation of errors and successfully maintains the strict analytic $\mathcal{O}(\sqrt{\epsilon})$ bound. Therefore, while their generalized lifting theorems are consistent, the explicit derivation presented here provides a tighter and computable bound for realistic experimental implementations. That was expected since our bound is tailored for a specific inequality: the CHSH one.

\textit{Conclusions.---}In this work, we have addressed a fundamental conceptual and practical gap in the theory of DI quantum certification. Standard robust self-testing frameworks usually rely on Naimark's dilation theorem to mathematically force uncharacterized local measurements to be projective. However, this approach hides the actual laboratory configuration and entirely obscures the error associated with using realistic, non-unitary observables. By maintaining state purification to model an untrusted source, while choosing not to dilate the local measurement devices, we have established a robust self-testing proof for the bipartite singlet state and Pauli observables that explicitly does not assume projectiveness of the measurements.

By introducing a rigorous regularization technique based on a modified sign function and applying the Cauchy-Schwarz inequality to the CHSH expectation value, we have explicitly bounded the non-unitarity of the physical measurement devices. This allowed us to separate the effect of regularization into a standard robustness term $f_\textrm{unitaries}$ and an extra deviation term $\Delta$ capturing the POVM deviation, yielding a fully analytic robustness bound that scales as $\mathcal{O}(\sqrt{\epsilon})$.

Our results open promising avenues for future research. A natural next step is to extend this technique of a robust self-testing framework to broader scenarios, such as multipartite Bell inequalities or higher-dimensional systems, where non-projective measurements may play an even more prominent role. Furthermore, while our analytical bound is rigorous and physically motivated, applying advanced optimization techniques, such as those introduced by Kaniewski \cite{Kaniewski2016Analytic}, could potentially tighten the numerical constants or yield near-optimal bounds for robust self-testing without assuming that the measurements are projective. Finally, integrating our explicit bound for $\Delta$ into practical DI cryptography protocols, such as quantum key distribution and randomness generation, where actual physical implementations inevitably suffer from measurement non-projectivity, could give insights into the practicality of security protocols of near-term quantum technologies.

\begin{acknowledgments}
  ACO thanks J\k{e}drzej Kaniewski, Laura Man\v{c}inska, Ranyiliu Chen, and Ritopriyo Pal for useful discussions and comments. This work was supported by the QuantERA II Programme that has received funding from the European Union's Horizon 2020 research and innovation programme under Grant Agreement No 101017733, project “PhoMemtor” No. 2021/03/Y/ST2/00177. ACO also acknowledges that this study was financed, in part, by the S\~ao Paulo Research Foundation (FAPESP), Brazil. Process Numbers 2025/10927-1 and 2026/08210-4. M.S.-M. acknowledges support from the Foundation for Polish Science's International Research Agenda Programme project ``Center for Hybrid Quantum-Classical Information Technologies -- QLAB'' No. FENG.02.01-IP.05-B013/25, financed by the FENG programme 2021-2027.
\end{acknowledgments}

\bibliography{bibliography}

\onecolumngrid

\newpage
\appendix

\begin{center}
  SUPPLEMENTAL MATERIAL
\end{center}

\setcounter{theorem}{0}

\section{Proof of Theorem 1}
\begin{theorem}
  Let $A_x$ acting on $\mathcal{H_A}$ and $B_y$ acting on $\mathcal{H_B}$ be Alice's and Bob's quantum observables, respectively, in a CHSH scenario such that $A_x^2\leqslant \I$ and $B_y^2\leqslant \I$ for $x,y\in\{0,1\}$. Let $\ket{\psi_\mathcal{AB}}$ be the bipartite pure quantum state shared by them. If $\expval{\mathfrak{B}}=2\sqrt{2}-\epsilon$, then $\mathcal{H_A}=\mathcal{H}_\mathcal{A'}\otimes \mathcal{H}_\mathcal{A''}$, $\mathcal{H_B}=\mathcal{H}_\mathcal{B'}\otimes \mathcal{H}_\mathcal{B''}$, and there exists a unitary operator $U=U_\mathcal{A}\otimes U_\mathcal{B}$ and an auxiliary state $\ket{\xi_\mathcal{A''B''}}$ such that
  \begin{equation}
    \norm{U\left(A_x\otimes \widetilde{B}_y\right)\ket{\psi_\mathcal{AB}}  -A_x'\otimes B_y'\ket{\phi_\mathcal{A'B'}^+}\otimes \ket{\xi_\mathcal{A''B''}}}\leqslant C\sqrt{\epsilon},
  \end{equation}
  for $x,y\in\{0,1,2\}$, where
  \begin{align}
    & A_0'=B_0'=\sigma_x,\\
    & A_1'=B_1'=\sigma_z,\\
    & A_2=B_2=A_2'=B_2'=\I,
  \end{align}
  and $C=(179+53\sqrt{2})/2^\frac{3}{4}$.
\end{theorem}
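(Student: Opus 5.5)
The plan follows the four steps of the sketch in the main text, with explicit intermediate bounds. Write $\ket{\psi}=\ket{\psi_\mathcal{AB}}$.

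\emph{Step 1: approximate relations for the physical operators.} Start from the SOS decomposition \eqref{SOS_decomposition}. Since $\expval{\mathfrak{B}}=2\sqrt{2}-\epsilon$, Cauchy--Schwarz gives
\[
|\expval{\mathfrak{B}}|^2\leqslant 2\expval{(A_0^2+A_1^2)\otimes \I}\,\expval{\I\otimes(B_0^2+B_1^2)} .
\]
Together with $A_x^2,B_y^2\leqslant\I$, this forces $\expval{A_x^2\otimes\I}$ and $\expval{\I\otimes B_y^2}$ to be at least $1-O(\epsilon)$. The analogous statement for products gives $\expval{A_i^2\otimes B_j^2}\geqslant 1-2\sqrt{2}\epsilon$. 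Hence $\norm{(\I-A_x^2)^{1/2}\otimes\I\ket{\psi}}=O(\sqrt{\epsilon})$, and similarly for Bob.

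Next I bound $\norm{(A_i\otimes\I-\I\otimes\widetilde{B}_i)\ket{\psi}}$. I expand its square as
\[
\expval{A_i^2}+\expval{\widetilde{B}_i^2}-2\expval{A_i\otimes\widetilde{B}_i}
\]
and use $\expval{A_0\widetilde B_0}+\expval{A_1\widetilde B_1}=(2\sqrt2-\epsilon)/\sqrt2$. Together with $\widetilde{B}_0^2+\widetilde{B}_1^2=B_0^2+B_1^2\leqslant 2\I$ and the per-term bound $|\expval{A_i\widetilde B_i}|\leqslant 1$, this shows each term is at least $1-\epsilon/\sqrt2$. The result is an $O(\sqrt{\epsilon})$ bound. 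As in the unitary case, combining these relations yields $\norm{\{A_0,A_1\}\otimes\I\ket{\psi}}=O(\sqrt\epsilon)$ and $\norm{\I\otimes\{\widetilde B_0,\widetilde B_1\}\ket\psi}=O(\sqrt\epsilon)$. The argument transfers Alice's anticommutator to Bob's side via $A_i\ket\psi\approx\widetilde B_i\ket\psi$, using that $\widetilde B_0\widetilde B_1+\widetilde B_1\widetilde B_0=B_0^2-B_1^2$ is small on the state.

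\emph{Step 2: regularization.} Set $\mathbb{A}_x=\sgn^*(A_x)$ and $\widetilde{\mathbb{B}}_y=\sgn^*(\widetilde B_y)$; these are Hermitian unitaries commuting with the originals. The key operator inequality is
\[
|\sgn^*(X)-X|=\I-|X|\leqslant \I-X^2 \qquad\text{for } X^2\leqslant \I ,
\]
so that
\[
\norm{(\mathbb{A}_x-A_x)\otimes\I\ket\psi}^2\leqslant\expval{(\I-A_x^2)^2}\leqslant\expval{\I-A_x^2}=O(\epsilon).
\]
For Bob's tilted operators I use $\I-\widetilde B_y^2$, whose expectation is small because $\expval{\widetilde B_0^2+\widetilde B_1^2}\geqslant 2-O(\epsilon)$ and each $\widetilde B_y^2\leqslant \I$ on the state up to the $\{B_0,B_1\}$-term. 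Here I will need care, since $\widetilde B_y^2$ itself may exceed $\I$. I would instead bound $\norm{(\I-|\widetilde B_y|)\ket\psi}$ via $\expval{\widetilde B_y^2}$ close to $1$ from both sides, which follows from Step 1. The triangle inequality then transfers all Step 1 relations to the regularized operators at cost $O(\sqrt\epsilon)$, giving $\epsilon_1,\epsilon_2=O(\sqrt\epsilon)$ in \eqref{eq:mckague_eps1}--\eqref{eq:mckague_eps2}.

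\emph{Step 3: invoke McKague et al.\ and split.} Apply Theorem 1 of \cite{McKague2012Oct} to $\mathbb{A}_x,\widetilde{\mathbb{B}}_y$ to obtain $U$ and $\ket{\xi}$ with error $f_\mathrm{unitaries}=(11\epsilon_1+5\epsilon_2)/2$. Then bound $\Delta$ from \eqref{definition_delta} by
\[
\Delta\leqslant\norm{(A_x-\mathbb{A}_x)\otimes\widetilde B_y\ket\psi}+\norm{\mathbb{A}_x\otimes(\widetilde B_y-\widetilde{\mathbb{B}}_y)\ket\psi}.
\]
The second term is controlled using unitarity of $\mathbb{A}_x$. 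The first term needs $\norm{\widetilde B_y}\leqslant\sqrt2$ together with commuting the local factors, so that the $\widetilde B_y$ factor can be moved off the $A$-side deviation. Adding the constants from $f_\mathrm{unitaries}$ and $\Delta$ yields $C$.

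The main obstacle is Step 2 for Bob: $\widetilde B_y$ is not a contraction, and the cross term $(A_0^2-A_1^2)\otimes\{B_0,B_1\}$ in \eqref{SOS_decomposition} has no sign. Controlling $\expval{\widetilde B_y^2}$ near $1$, and hence the regularization error, with only $O(\sqrt\epsilon)$ loss is the delicate estimate; I would attack it with the parallelogram law applied to $B_0\pm B_1$.
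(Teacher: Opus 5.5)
\emph{Overall.} Your architecture is the paper's: regularize with $\sgn^*$, apply McKague et al.\ to the regularized operators, and split the error into $\Delta+f_\mathrm{unitaries}$. Two of your choices are cleaner than the paper's:
\begin{itemize}
\item You define $\widetilde{\mathbb{B}}_y=\sgn^*(\widetilde B_y)$. The spectral argument behind $\norm{(\widetilde B_y-\widetilde{\mathbb{B}}_y)\ket\psi}\leqslant\norm{(\I-\widetilde B_y^2)\ket\psi}$ needs $\widetilde{\mathbb{B}}_y$ to be a function of $\widetilde B_y$. Your choice is; the paper's $\sgn^*(\mathbb{B}_0\pm\mathbb{B}_1)$ in general is not.
\item The identity $\{\widetilde B_0,\widetilde B_1\}=B_0^2-B_1^2$ is a neat way to get Bob's tilted anticommutator.
\end{itemize}
However, the proof is not complete. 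The gap sits exactly where the paper does its real work: a \emph{vector} bound on $(\I-\widetilde B_y^2)\ket\psi$.

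\emph{Why your route fails.} For Alice this bound was free, because $0\leqslant\I-A_x^2\leqslant\I$ gives $\norm{(\I-A_x^2)\ket\psi}^2\leqslant\langle\I-A_x^2\rangle$. For Bob, $\widetilde B_y^2$ ranges up to $2\I$, so $\I-\widetilde B_y^2$ is indefinite. Two-sided control of $\langle\widetilde B_y^2\rangle$ near $1$ then says nothing about where $\ket\psi$ sits in the spectrum of $|\widetilde B_y|$. For example, a state split evenly between the eigenvalues $0$ and $\sqrt2$ of $\widetilde B_y$ has $\langle\widetilde B_y^2\rangle=1$ but $\norm{(\I-|\widetilde B_y|)\ket\psi}^2=2-\sqrt2$. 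A parallelogram identity for $(B_0\pm B_1)\ket\psi$ only reproduces the expectation identity $\langle\widetilde B_0^2+\widetilde B_1^2\rangle=\langle B_0^2+B_1^2\rangle$, so it does not help. Your Step 1 bound on $\{\widetilde B_0,\widetilde B_1\}$ does not help either, because what enters $\I-\widetilde B_y^2$ is the untilted $\{B_0,B_1\}=\widetilde B_0^2-\widetilde B_1^2$.

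\emph{The missing idea and two ways to supply it.} The vector estimate must be imported from Alice's side through $A_y\ket\psi\approx\widetilde B_y\ket\psi$.
\begin{itemize}
\item The paper writes $\I-\widetilde B_y^2=\tfrac12(\I-B_0^2)+\tfrac12(\I-B_1^2)\mp\tfrac12\{B_0,B_1\}$ and $\{B_0,B_1\}=(A_0\widetilde B_0-A_1\widetilde B_1)-\widetilde B_0(A_0-\widetilde B_0)+\widetilde B_1(A_1-\widetilde B_1)$. It applies the parallelogram law to $v_i=A_i\widetilde B_i\ket\psi$: from $\norm{v_0+v_1}\geqslant\langle\mathfrak B\rangle/\sqrt2$ and $\norm{v_0}^2+\norm{v_1}^2\leqslant2$ it gets $\norm{v_0-v_1}^2\leqslant2\sqrt2\epsilon$. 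The remaining terms are bounded by $\sqrt2(\norm{u_0}^2+\norm{u_1}^2)^{1/2}$, where $u_i=(A_i-\widetilde B_i)\ket\psi$.
\item With your own ingredients there is a shorter fix. Since $A_y$ and $\widetilde B_y$ commute, $\widetilde B_y^2-A_y^2=(\widetilde B_y+A_y)(\widetilde B_y-A_y)$. Hence $\norm{(\I-\widetilde B_y^2)\ket\psi}\leqslant\norm{(\I-A_y^2)\ket\psi}+(1+\sqrt2)\norm{u_y}$. Moreover $|\sgn^*(\lambda)-\lambda|\leqslant|1-\lambda^2|$ holds for every real $\lambda$, so non-contractivity of $\widetilde B_y$ is harmless once this vector bound is in hand.
\end{itemize}

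\emph{Smaller points.}
\begin{itemize}
\item Your ``per-term bound'' $|\langle A_i\otimes\widetilde B_i\rangle|\leqslant1$ is false in general: $A_0=B_0=B_1=\I$ gives $\sqrt2$. You do not need it, since summing gives directly $\norm{u_0}^2+\norm{u_1}^2\leqslant4-\sqrt2\langle\mathfrak B\rangle=\sqrt2\epsilon$, as in the paper.
\item ``The triangle inequality transfers all relations to the regularized operators'' hides a needed trick. Terms such as $(\mathbb{A}_0-A_0)A_1\ket\psi$ are not controlled by $\norm{(\mathbb{A}_0-A_0)\ket\psi}$. You must first replace $A_1\ket\psi$ by $\widetilde B_1\ket\psi$, at cost $\norm{\mathbb{A}_0-A_0}_\infty\norm{u_1}$, and then commute $\widetilde B_1$ out, so that the deviation acts directly on $\ket\psi$.
\item The statement asserts the explicit constant $C$, but you never track constants. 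Your intermediate bounds differ from the paper's, so you would need that bookkeeping to show a total of at most $C\sqrt\epsilon$ rather than merely $O(\sqrt\epsilon)$.
\end{itemize}
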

\begin{proof}
  \textit{(Step 1. Regularization of Observables)}  Because the observables may be non-unitary, the bounds presented in the seminal work of McKague et al.~\cite{McKague2012Oct} require modifications to account for the non-projectiveness of the measurements. For this reason, let us consider the regularization of Alice's and Bob's observables via the modified sign function $\sgn^*$:
  \begin{equation}
    \mathbb{A}_x=\sgn^*\left(A_x\right)\quad\text{and}\quad\mathbb{B}_y=\sgn^*\left(B_y\right).
  \end{equation}
  These act on operators as follows: if we consider the spectral decomposition $A_0 = \sum_k \lambda_k \ket{v_k}\bra{v_k}$, then $\sgn^*(A_0) = \sum_k \sgn^*(\lambda_k)\ket{v_k}\bra{v_k}$, where
  \begin{equation}
    \sgn^*(x)=\left\{
      \begin{array}{ll}
        1,&\text{ if }x\geqslant 0\\
        -1,&\text{ if }x<0
      \end{array} \right. .
    \end{equation}
    Therefore, $\mathbb{A}_x^2=\mathbb{B}_y^2=\I$ for $x,y\in\{0,1\}$. Because 
    \begin{equation}
    \widetilde{B}_0 \coloneqq \frac{B_0+B_1}{\sqrt{2}}\quad\textrm{and}\qquad \widetilde{B}_1\coloneqq \frac{B_0-B_1}{\sqrt{2}},
    \end{equation}
    we define Bob's tilted regularized observables as
    \begin{equation}
      \widetilde{\mathbb{B}}_0\coloneqq \sgn^*\left(\mathbb{B}_0+\mathbb{B}_1\right)\quad\text{and}\quad\widetilde{\mathbb{B}}_1\coloneqq \sgn^*\left(\mathbb{B}_0-\mathbb{B}_1\right).
    \end{equation}
    Note that the factor $1/\sqrt{2}$ is absorbed by the modified sign function. Also, both $\widetilde{\mathbb{B}}_0$ and $\widetilde{\mathbb{B}}_1$ are Hermitian and unitary by construction, satisfying $\widetilde{\mathbb{B}}_y^2=\I$ for $y\in\{0,1\}$.

    \textit{(Step 2. Splitting the Error)} Following the previous step, we can invoke Theorem 1 from Ref.~\cite{McKague2012Oct}, which states that if \textit{``from the observed correlations, one can deduce the existence of local unitary observables $\mathbb{A}_0$, $\mathbb{A}_1$, $\mathbb{B}_0$, and $\mathbb{B}_1$ with eigenvalues $\pm 1$, which act on the bipartite state $\ket{\psi_\mathcal{AB}}$ such that:
      \begin{align}
        \norm{\{\mathbb{A}_0,\mathbb{A}_1\} \ket{\psi_\mathcal{AB}} } &\leqslant 2\epsilon_1 \label{norm_anti_bbA0-bbA1},\\
        \norm{\{\widetilde{\mathbb{B}}_0,\widetilde{\mathbb{B}}_1\} \ket{\psi_\mathcal{AB}}} &\leqslant 2\epsilon_1 \label{norm_anti_bbB0tilde-bbB1tilde},\\
        \|(\mathbb{A}_0 - \widetilde{\mathbb{B}}_0) \ket{\psi_\mathcal{AB}}\| &\leqslant \epsilon_2 \label{norm_bbA0-bbB0},\\
        \|(\mathbb{A}_1 - \widetilde{\mathbb{B}}_1) \ket{\psi_\mathcal{AB}}\| &\leqslant \epsilon_2 \label{norm_bbA1-bbB1},
      \end{align}
      then there exists a local isometry $U = U_\mathcal{A} \otimes U_\mathcal{B}$ and a state $\ket{\xi_\mathcal{A''B''}}$ such that
      \begin{equation}\label{theo_scarani}
        \norm{U\left(\mathbb{A}_x\otimes \widetilde{\mathbb{B}}_y\right)\ket{\psi_\mathcal{AB}}-A_x'\otimes B_y'\ket{\phi_\mathcal{A'B'}^+}\otimes \ket{\xi_\mathcal{A''B''}}} \leqslant \frac{11\epsilon_1 + 5\epsilon_2}{2} \eqqcolon f_\mathrm{unitaries},
      \end{equation}
    for $x,y \in \{0,1,2\}$, where $A_0'=B_0'=\sigma_x$, $A_1'=B_1'=\sigma_z$, and $A_2'=B_2'=\mathbb{A}_2=\widetilde{\mathbb{B}}_2=\I$.''} Note that the result of McKague et al.~\cite{McKague2012Oct} guarantees the decomposition of the Hilbert spaces and the existence of local unitaries for the regularized observables. Let us apply those local unitaries on our non-regularized observables to check how much close they can be from the reference measurements. In sequence, we can sum zero and apply the triangle inequality and the unitary invariance of the norm:
    \begin{align}
      & \norm{U\left(A_x\otimes \widetilde{B}_y\right)\ket{\psi_\mathcal{AB}}  -A_x'\otimes B_y'\ket{\phi_\mathcal{A'B'}^+}\otimes \ket{\xi_\mathcal{A''B''}}} \nonumber\\
      &\leqslant  \norm{U\left(A_x\otimes \widetilde{B}_y\right)\ket{\psi_\mathcal{AB}} - U\left(\mathbb{A}_x\otimes\widetilde{\mathbb{B}}_y\right)\ket{\psi_\mathcal{AB}}}+\norm{U\left(\mathbb{A}_x\otimes\widetilde{\mathbb{B}}_y\right)\ket{\psi_\mathcal{AB}} -A_x'\otimes B_y'\ket{\phi_\mathcal{A'B'}^+}\otimes \ket{\xi_\mathcal{A''B''}}},\\
      &= \norm{\left(A_x\otimes \widetilde{B}_y-\mathbb{A}_x\otimes\widetilde{\mathbb{B}}_y\right)\ket{\psi_\mathcal{AB}}}+\norm{U\left(\mathbb{A}_x\otimes\widetilde{\mathbb{B}}_y\right)\ket{\psi_\mathcal{AB}} -A_x'\otimes B_y'\ket{\phi_\mathcal{A'B'}^+}\otimes \ket{\xi_\mathcal{A''B''}}},\\
      &= \Delta + f_\textrm{unitaries},\label{self-testing_non_unitary}
    \end{align}
    where we define the error caused by the regularization as
    \begin{equation}\label{error_Delta}
      \Delta \coloneqq \norm{\left(A_x\otimes \widetilde{B}_y-\mathbb{A}_x\otimes\widetilde{\mathbb{B}}_y\right)\ket{\psi_\mathcal{AB}}},
    \end{equation}
    and the bound provided by the regularized observables as
    \begin{equation}
    f_\textrm{unitaries} \coloneqq \norm{U\left(\mathbb{A}_x\otimes\widetilde{\mathbb{B}}_y\right)\ket{\psi_\mathcal{AB}} -A_x'\otimes B_y'\ket{\phi_\mathcal{A'B'}^+}\otimes \ket{\xi_\mathcal{A''B''}}}.
    \end{equation}

    \textit{(Step 3. Bound for $\Delta$)} To bound the regularization error on Alice's side, we recall the spectral decomposition $A_x=\sum_k \lambda_k\ket{v_k}\bra{v_k}$, with eigenvalues satisfying $\lambda_k\in[-1,1]$ since $A_x^2\leqslant\I$. Following the definition of the modified sign function, $\mathbb{A}_x=\sum_k\sgn^*(\lambda_k)\ket{v_k}\bra{v_k}$, so
    \begin{equation}
      \mathbb{A}_x-A_x=\sum_k(\sgn^*(\lambda_k)-\lambda_k)\ket{v_k}\bra{v_k}
    \end{equation}
    For each eigenvalue, we have 
    \begin{equation}
    |\sgn^*(\lambda_k)-\lambda_k|=1-|\lambda_k|\leqslant 1-\lambda_k^2.
    \end{equation}
    Therefore, in the common eigenbasis of $A_x$, the deviation $\mathbb{A}_x-A_x$ is dominated by $\I-A_x^2$, which yields the vector inequality below:
    \begin{equation}\label{difAx}
      \norm{(\mathbb{A}_x - A_x)\ket{\psi}} \leqslant \norm{(\I - A_x^2)\ket{\psi}}.
    \end{equation}

    Now, let us impose the main premise behind the theorem, namely $\expval{\mathfrak{B}}=2\sqrt{2}-\epsilon$, where $0\leqslant\epsilon \leqslant 2\sqrt{2}-2\simeq 0.83$, for some state $\ket{\psi}$.     \begin{center}
          \end{center}
        \begin{center}
          \end{center}
    To derive the required bounds, we use the Cauchy--Schwarz inequality. Let us define the following vectors:
    \begin{align}
      \ket{a}
      &\coloneqq
      \begin{pmatrix}
        (A_0\otimes\I)\ket{\psi}\\
        (A_1\otimes\I)\ket{\psi}
      \end{pmatrix},
      &
      \ket{b}
      &\coloneqq
      \begin{pmatrix}
        (\I\otimes(B_0+B_1))\ket{\psi}\\
        (\I\otimes(B_0-B_1))\ket{\psi}
      \end{pmatrix}.
    \end{align}
    Since Alice's operators commute with Bob's operators, the inner product between these vectors is
    \begin{equation}
      \braket{a}{b}=\expval{\mathfrak{B}}.
    \end{equation}
    The Cauchy-Schwarz inequality therefore gives
    \begin{equation}\label{CSinequality}
      \left|\expval{\mathfrak{B}}\right|^2\leqslant \norm{a}^2\norm{b}^2.
    \end{equation}
    The two factors on the right-hand side are
    \begin{align}
      \norm{a}^2
      &=\expval{\left(A_0^2+A_1^2\right)\otimes\I},\\
      \norm{b}^2
      &=\expval{\I\otimes\left[\left(B_0+B_1\right)^2+\left(B_0-B_1\right)^2\right]}=2\expval{\I\otimes\left(B_0^2+B_1^2\right)}\leqslant 4.
    \end{align}
    From \eqref{CSinequality} and the above, we have
    \begin{equation}
      \expval{\left(A_0^2+A_1^2\right)\otimes\I}
      \geqslant \frac{\left|\expval{\mathfrak{B}}\right|^2}{4}.
    \end{equation}
    Repeating the same argument after writing $\mathfrak{B}=(A_0+A_1)\otimes B_0+(A_0-A_1)\otimes B_1$ gives the corresponding bound for Bob. Therefore, using $\expval{\mathfrak{B}}=2\sqrt{2}-\epsilon$, we obtain
    \begin{align}
      \min\left\{\expval{\left(A_0^2+A_1^2\right)\otimes\I},
      \expval{\I\otimes\left(B_0^2+B_1^2\right)}\right\}\geqslant \frac{\left|\expval{\mathfrak{B}}\right|^2}{4}=2-\sqrt{2}\epsilon+\frac{\epsilon^2}{4}\geqslant 2-\sqrt{2}\epsilon.
    \end{align}
    Since each squared observable is bounded above by the identity, the inequality above implies, for every $i,j\in\{0,1\}$,
    \begin{align}
      \expval{A_i^2\otimes\I}&\geqslant 1-\sqrt{2}\epsilon,\\
      \expval{\I\otimes B_j^2}&\geqslant 1-\sqrt{2}\epsilon.
    \end{align}
    Now, let $X=A_i^2\otimes\I$ and $Y=\I\otimes B_j^2$. These operators are commuting positive contractions, and therefore
    \begin{equation}
      (\I-X)(\I-Y)\geqslant 0
      \quad\Longrightarrow\quad
      XY\geqslant X+Y-\I.
    \end{equation}
    Taking the expectation value and using the two inequalities above, we conclude that
    \begin{equation}
      1-2\sqrt{2}\epsilon\leqslant \expval{A_i^2\otimes B_j^2},
    \end{equation}
    for any pair $i,j\in\{0,1\}$. In particular, because $\expval{A_i^2\otimes B_j^2}\leqslant \expval{A_i^2}$ and $\expval{A_i^2\otimes B_j^2}\leqslant \expval{B_j^2}$, the above inequality implies that
    \begin{equation}\label{norm_Ai}
      1-2\sqrt{2}\epsilon \leqslant\expval{A_i^2} = \bra{\psi} A_i^\dagger A_i\ket{\psi} = \norm{A_i\ket{\psi}}^2\leqslant 1,
    \end{equation}
    with an identical expression for $B_i$. Note that we are going to omit tensor products and identities when there is no chance of confusion.

    From \eqref{norm_Ai}, we have
    \begin{equation}
      2\sqrt{2}\epsilon\geqslant \expval{\I-A_i^2}\geqslant \expval{\left(\I-A_i^2\right)^2}=\norm{\left(\I-A_i^2\right)\ket{\psi}}^2,
    \end{equation}
    which implies that
    \begin{equation}
      \norm{\left(\I-A_i^2\right)\ket{\psi}}\leqslant 2^{\frac{3}{4}}\sqrt{\epsilon},
    \end{equation}
    with an identical expression for $B_i$. Therefore, from \eqref{difAx} and the bound above, we obtain
    \begin{equation}\label{norm_bbA-A}
      \norm{(\mathbb{A}_x - A_x)\ket{\psi}} \leqslant 2^{\frac{3}{4}}\sqrt{\epsilon},
    \end{equation}
    with an identical expression for $\mathbb{B}_y$ and $B_y$, for every $x,y\in\{0,1\}$.

    Now, let us calculate the error $\Delta$ defined by \eqref{error_Delta}. For that, let us sum zero in the following way:
    \begin{equation}\label{decomposition_Delta}
      A_x \otimes \widetilde{B}_y - \mathbb{A}_x \otimes \widetilde{\mathbb{B}}_y = (A_x - \mathbb{A}_x) \otimes \widetilde{B}_y + \mathbb{A}_x \otimes (\widetilde{B}_y - \widetilde{\mathbb{B}}_y).
    \end{equation}
    In general, for arbitrary operators $M$ and $N$, and quantum states $\ket{\psi}$, the Euclidean vector norm $\norm{\cdot}$ and the operator norm $\norm{\cdot}_\infty$ satisfy
    \begin{equation}
      \norm{MN\ket{x}}\leqslant \norm{M}_\infty  \norm{N\ket{x}}.
    \end{equation}
    By applying the above inequality to \eqref{decomposition_Delta} we obtain
    \begin{equation}\label{Delta}
      \Delta\leqslant \norm{\widetilde{B}_y}_\infty\norm{\left(A_x-\mathbb{A}_x\right)\ket{\psi}} + \norm{\mathbb{A}_x}_\infty\norm{\left(\widetilde{B}_y-\widetilde{\mathbb{B}}_y\right)\ket{\psi}}
    \end{equation}

    To bound the second term in \eqref{Delta}, we use the same regularization argument employed for Alice's observables, now applied to Bob's tilted observables:
    \begin{equation}
      \norm{\left(\widetilde{B}_y-\widetilde{\mathbb{B}}_y\right)\ket{\psi}}\leqslant \norm{\left(\I-\widetilde{B}_y^2\right)\ket{\psi}},\qquad y\in\{0,1\}.
    \end{equation}
    Therefore, it remains to upper bound $\norm{(\I-\widetilde{B}_y^2)\ket{\psi}}$. Using the definitions $\widetilde{B}_0=(B_0+B_1)/\sqrt{2}$ and $\widetilde{B}_1=(B_0-B_1)/\sqrt{2}$, we have
    \begin{align}
      \I-\widetilde{B}_0^2 &=\frac{1}{2}\left(\I-B_0^2\right) + \frac{1}{2}\left(\I-B_1^2 \right)+\frac{1}{2}\{B_0,B_1 \},\\
      \I-\widetilde{B}_1^2 &=\frac{1}{2}\left(\I-B_0^2\right) + \frac{1}{2}\left(\I-B_1^2 \right)-\frac{1}{2}\{B_0,B_1 \}.
    \end{align}
    Applying the triangle inequality to both cases yields
    \begin{equation}\label{norm_Btilde-bbBtilde}
      \norm{\left(\widetilde{B}_y-\widetilde{\mathbb{B}}_y\right)\ket{\psi}}\leqslant \frac{1}{2}\norm{\left(\I-B_0^2\right)\ket{\psi}}+\frac{1}{2}\norm{\left(\I-B_1^2\right)\ket{\psi}}+\frac{1}{2}\norm{\left\{B_0,B_1\right\}\ket{\psi}}, \qquad \forall y.
    \end{equation}
    The first two terms in \eqref{norm_Btilde-bbBtilde} are controlled by the same estimate obtained from \eqref{norm_Ai} for Bob's observables, namely
    \begin{equation}\label{norm_1-B2}
      \norm{\left(\I-B_y^2\right)\ket{\psi}}\leqslant 2^\frac{3}{4}\sqrt{\epsilon}.
    \end{equation}
    It remains to bound the anti-commutator term in \eqref{norm_Btilde-bbBtilde}. For this, we rewrite it in terms of tilted observables:
    \begin{align}
      \{B_0,B_1\}&=\widetilde{B}_0^2-\widetilde{B}_1^2,\\
      &= A_0\widetilde{B}_0-\widetilde{B}_0\left(A_0-\widetilde{B}_0\right)-A_1\widetilde{B}_1+\widetilde{B}_1\left(A_1-\widetilde{B}_1\right).
    \end{align}
    Now, we can apply the triangle inequality to get
    \begin{equation}\label{norm_anticommutatorB0B1}
      \norm{\{B_0,B_1\}\ket{\psi}}\leqslant \norm{\left(A_0\widetilde{B}_0-A_1\widetilde{B}_1\right)\ket{\psi}}+\norm{\widetilde{B}_0\left(A_0-\widetilde{B}_0\right)\ket{\psi}+\widetilde{B}_1\left(A_1-\widetilde{B}_1\right)\ket{\psi}}.
    \end{equation}
    Let us start by dealing with the first norm on the right-hand side of the above inequality. To facilitate the manipulation of terms, let us employ the following notation:
    \begin{equation}
      \ket{v_0}=A_0\widetilde{B}_0\ket{\psi}\quad\text{and}\quad\ket{v_1},=A_1\widetilde{B}_1\ket{\psi}.
    \end{equation}
    as well as, $\norm{v}\equiv\norm{\,\ket{v}}$. Now, consider the parallelogram formula,
    \begin{equation}\label{paralelogramo}
      \norm{v_0-v_1}^2=2\norm{v_0}^2+2\norm{v_1}^2-\norm{v_0+v_1}^2,
    \end{equation}
    and the fact that
    \begin{equation}\label{notationv0+v1}
      \ket{v_0+v_1}\equiv\ket{v_0}+\ket{v_1}=\frac{1}{\sqrt{2}}\mathfrak{B}\ket{\psi}.
    \end{equation}
    By using the Cauchy-Schwarz inequality,
    \begin{equation}\label{cauchy-schwarz}
      |\braket{\psi}{\phi}|^2\leqslant \norm{\psi}^2 \norm{\phi}^2\quad \overset{\norm{\psi}^2=1}{\Longrightarrow}\quad\norm{\phi}^2\geqslant |\braket{\psi}{\phi}|^2,
    \end{equation}
    and \eqref{notationv0+v1}, we can see that
    \begin{equation}
      \norm{v_0+v_1}^2\geqslant \frac{1}{2}\left| \bra{\psi}{\mathfrak{B}}\ket{\psi}\right|^2=\frac{1}{2}\left(2\sqrt{2}-\epsilon\right)^2\geqslant 4-2\sqrt{2}\epsilon.
    \end{equation}
    In addition,
    \begin{equation}
      \norm{v_0}^2+\norm{v_1}^2=\bra{\psi} \widetilde{B}_0A_0^2\widetilde{B}_0+\widetilde{B}_1A_1^2\widetilde{B}_1\ket{\psi}\leqslant \bra{\psi}\widetilde{B}_0^2+\widetilde{B}_1^2 \ket{\psi}=\bra{\psi}B_0^2+B_1^2 \ket{\psi}\leqslant 2 \bra{\psi}\I\ket{\psi}=2.
    \end{equation}
    Inserting the two inequalities above in \eqref{paralelogramo} results in
    \begin{equation}
      \norm{v_0-v_1}^2\leqslant 2\sqrt{2}\epsilon,
    \end{equation}
    which means that
    \begin{equation}\label{norm_dif}
      \norm{\left(A_0\widetilde{B}_0-A_1\widetilde{B}_1\right)\ket{\psi}}\leqslant 2^\frac{3}{4}\sqrt{\epsilon}.
    \end{equation}
    Now, let us work with the second term on the right-hand side of \eqref{norm_anticommutatorB0B1}. To proceed, let us denote
    \begin{equation}
      \ket{u_0}=\left(A_0-\widetilde{B}_0\right)\ket{\psi},\qquad\textrm{and}\qquad \ket{u_1}=\left(A_1-\widetilde{B}_1\right)\ket{\psi}.
    \end{equation}
    Consequently,
    \begin{align}
      \norm{u_0}^2+\norm{u_1}^2 &=\bra{\psi}\left(A_0-\widetilde{B}_0\right)^2+\left(A_1-\widetilde{B}_1\right)^2\ket{\psi},\\
      & =\bra{\psi}A_0^2+A_1^2+B_0^2+B_1^2-2\left(A_0\widetilde{B}_0+A_1\widetilde{B}_1\right) \ket{\psi},\\
      & \leqslant 4-\sqrt{2}\bra{\psi}\mathfrak{B}\ket{\psi},\\
      & = \sqrt{2}\epsilon.
    \end{align}
    Since $\norm{u_0}^2\leqslant\norm{u_0}^2+\norm{u_1}^2$ and $\norm{u_1}^2\leqslant\norm{u_0}^2+\norm{u_1}^2$, the above inequality implies that
    \begin{equation}\label{norm_Ai-Bitilde}
      \norm{\left(A_i-\widetilde{B}_i\right)\ket{\psi}}\leqslant 2^\frac{1}{4}\sqrt{\epsilon},\qquad\forall i.
    \end{equation}
    Now, let us define the following row and column matrices:
    \begin{equation}
      M\coloneqq
      \begin{pmatrix}
        -\widetilde{B}_0 & \widetilde{B}_1
      \end{pmatrix}\qquad\text{and}\qquad \ket{u}\coloneqq
      \begin{pmatrix}
        \ket{u_0}\\
        \ket{u_1}
      \end{pmatrix},
    \end{equation}
    such that
    \begin{equation}
      \widetilde{B}_0\left(A_0-\widetilde{B}_0\right)\ket{\psi}+\widetilde{B}_1\left(A_1-\widetilde{B}_1\right)\ket{\psi}=M\ket{u}.
    \end{equation}
    Observe that
    \begin{equation}
      \norm{M\ket{u}}\leqslant\norm{M}_\infty\norm{u}\qquad\textrm{and}\qquad\norm{M}_\infty^2=\norm{M^\dagger M}_\infty^2.
    \end{equation}
    Therefore, from
    \begin{equation}
      \norm{M^\dagger M}_\infty=\norm{\widetilde{B}_0^2+\widetilde{B}_1^2}_\infty=\norm{B_0^2+B_1^2}_\infty\leqslant 2
    \end{equation}
    and
    \begin{equation}
      \norm{v}=\sqrt{\norm{u_0}^2+\norm{u_1}^2}\leqslant 2^\frac{1}{4}\sqrt{\epsilon},
    \end{equation}
    we can conclude that
    \begin{equation}\label{normMv}
      \norm{\widetilde{B}_0\left(A_0-\widetilde{B}_0\right)\ket{\psi}+\widetilde{B}_1\left(A_1-\widetilde{B}_1\right)\ket{\psi}}\leqslant 2^\frac{3}{4}\sqrt{\epsilon}.
    \end{equation}
    Finally, by inserting \eqref{normMv} and \eqref{norm_dif} into \eqref{norm_anticommutatorB0B1}, we obtain
    \begin{equation}\label{norm_anti_B0B1}
      \norm{\left\{B_0,B_1 \right\}\ket{\psi}}\leqslant 2^\frac{7}{4}\sqrt{\epsilon}
    \end{equation}
    From \eqref{norm_Btilde-bbBtilde}, \eqref{norm_1-B2}, and the above, we obtain
    \begin{equation}\label{norm_final_Btilde-bbBtilde}
      \norm{\left(\widetilde{B}_y-\widetilde{\mathbb{B}}_y\right)\ket{\psi}}\leqslant 2^\frac{9}{4}\sqrt{\epsilon}
    \end{equation}
    Since $\norm{\widetilde{B}_y}_\infty\leqslant \sqrt{2}$ and $\norm{\mathbb{A}_x}_\infty=1$, we can insert \eqref{norm_final_Btilde-bbBtilde} and \eqref{norm_bbA-A} into \eqref{Delta} to get the final error factor
    \begin{equation}\label{bound_Delta}
      \Delta \leqslant 2^\frac{7}{4}\sqrt{\epsilon}.
    \end{equation}

    \textit{(Step 4. Bound for $f_\textrm{unitaries}$)} To obtain $f_\textrm{unitaries}$, we have to determine $\epsilon_1$ and $\epsilon_2$ to apply Theorem 1 of McKague et al. \cite{McKague2012Oct}, comprised in Eq. \eqref{theo_scarani}. Let us start by summing zero inside the norm of the left side in \eqref{norm_bbA0-bbB0}, that is,
    \begin{align}
      \norm{(\mathbb{A}_0 - \widetilde{\mathbb{B}}_0) \ket{\psi}} &= \norm{(\mathbb{A}_0 - A_0) \ket{\psi}+(A_0 - \widetilde{B}_0) \ket{\psi}+(\widetilde{B}_0 - \widetilde{\mathbb{B}}_0) \ket{\psi}}, \\
      &\leqslant \norm{(\mathbb{A}_0 - A_0) \ket{\psi}}+\norm{(A_0 - \widetilde{B}_0) \ket{\psi}}+\norm{(\widetilde{B}_0 - \widetilde{\mathbb{B}}_0) \ket{\psi}}.
    \end{align}
    We already successfully derived a bound for each one of the three terms on the right-hand side of the above inequality, and they are given by \eqref{norm_bbA-A}, \eqref{norm_Ai-Bitilde}, and \eqref{norm_final_Btilde-bbBtilde}, respectively. For that reason, we have
    \begin{equation}\label{epsilon_2}
      \epsilon_2 = (2^{3/4} + 2^{1/4} + 2^{9/4})\sqrt{\epsilon}=2^\frac{1}{4} \left(5 + \sqrt{2}\right)\sqrt{\epsilon}.
    \end{equation}
    Note that \eqref{norm_bbA1-bbB1} is bounded by the same value above, which allows us to write
    \begin{equation}\label{norm_bbA-bbBtilde}
      \norm{(\mathbb{A}_i - \widetilde{\mathbb{B}}_i) \ket{\psi}}\leqslant \epsilon_2 = 2^\frac{1}{4} \left(5 + \sqrt{2}\right)\sqrt{\epsilon},\qquad \forall i.
    \end{equation}

    Now, let us establish a bound for $\epsilon_1$ by analyzing \eqref{norm_anti_bbA0-bbA1}. First, note that by symmetry, Alice's non-regularized observables must obey a relation identical to \eqref{norm_anti_B0B1}, that is,
    \begin{equation}\label{norm_anti_A0A1}
      \norm{\left\{A_0,A_1 \right\}\ket{\psi}}\leqslant 2^\frac{7}{4}\sqrt{\epsilon}.
    \end{equation}
    Let us use the relation above to bound the effect of Alice's regularized anti-commutator by summing zero:
    \begin{align}
      \norm{\{\mathbb{A}_0,\mathbb{A}_1\} \ket{\psi} } & = \norm{\left(\mathbb{A}_0\mathbb{A}_1+\mathbb{A}_1\mathbb{A}_0\right)\ket{\psi}},\\
      &\leqslant \norm{\left\{A_0,A_1 \right\}\ket{\psi}}+\norm{\left( \mathbb{A}_0\mathbb{A}_1-A_0A_1 \right)\ket{\psi}}+\norm{\left( \mathbb{A}_1\mathbb{A}_0-A_1A_0 \right)\ket{\psi}}.\label{midstep_norm_bbA0bbA1}
    \end{align}
    On the right-hand side of the above inequality, the second term can be bounded using the triangle inequality as follows:
    \begin{align}
      \norm{\left( \mathbb{A}_0\mathbb{A}_1-A_0A_1 \right)\ket{\psi}}&\leqslant \norm{\mathbb{A}_0(\mathbb{A}_1 - A_1)\ket{\psi}} + \norm{(\mathbb{A}_0 - A_0)A_1\ket{\psi}},\\
      &\leqslant \norm{\mathbb{A}_0}_\infty \norm{(\mathbb{A}_1 - A_1)\ket{\psi}} + \norm{(\mathbb{A}_0 - A_0)A_1\ket{\psi}}.\label{3norms}
    \end{align}
    Since $\norm{\mathbb{A}_0}_\infty=1$ and the second term in the above inequality is bounded by \eqref{norm_bbA-A}, we still have to bound the last term. We have that
    \begin{align}
      \norm{(\mathbb{A}_0 - A_0)A_1\ket{\psi}}&\leqslant \norm{\left(\mathbb{A}_0 - A_0\right)\left(A_1 - \widetilde{B}_1\right)\ket{\psi}} + \norm{\left(\mathbb{A}_0 - A_0\right)\widetilde{B}_1\ket{\psi}},\\
      &\leqslant \norm{\mathbb{A}_0 - A_0}_\infty \norm{\left(A_1 - \widetilde{B}_1\right)\ket{\psi}} + \norm{\widetilde{B}_1}_\infty \norm{\left(\mathbb{A}_0 - A_0\right)\ket{\psi}}.
    \end{align}
    On the right-hand side of the above inequality, each term is bounded by $\norm{\mathbb{A}_0 - A_0}_\infty\leqslant 2$, \eqref{norm_Ai-Bitilde}, $\norm{\widetilde{B}_1}_\infty\leqslant \sqrt{2}$, and \eqref{norm_bbA-A}, respectively. Therefore, we have
    \begin{equation}
      \norm{(\mathbb{A}_0 - A_0)A_1\ket{\psi}}\leqslant 2^\frac{9}{4}\sqrt{\epsilon}.
    \end{equation}
    Back to \eqref{3norms}, we have
    \begin{equation}
      \norm{\left( \mathbb{A}_0\mathbb{A}_1-A_0A_1 \right)\ket{\psi}}\leqslant 1\cdot  2^{\frac{3}{4}}\sqrt{\epsilon} +  2^\frac{9}{4}\sqrt{\epsilon}=2^\frac{1}{4} \left(4 + \sqrt{2}\right)\sqrt{\epsilon}.
    \end{equation}
    An identical analysis will provide
    \begin{equation}
      \norm{\left( \mathbb{A}_1\mathbb{A}_0-A_1A_0 \right)\ket{\psi}}\leqslant2^\frac{1}{4} \left(4 + \sqrt{2}\right)\sqrt{\epsilon}.
    \end{equation}
    Finally, we can insert \eqref{norm_anti_A0A1} and the above two inequalities into \eqref{midstep_norm_bbA0bbA1} to obtain
    \begin{equation}\label{norm_anticommutator_bbA0bbA1}
      \norm{\{\mathbb{A}_0,\mathbb{A}_1\} \ket{\psi} }\leqslant 2^\frac{7}{4}\sqrt{\epsilon} + 2\cdot  2^\frac{1}{4} \left(4 + \sqrt{2}\right)\sqrt{\epsilon} =2^\frac{9}{4}\left(2+\sqrt{2}\right)\sqrt{\epsilon}.
    \end{equation}
    To finish, let us bound Bob's tilted regularized anti-commutator $\norm{\left\{\widetilde{\mathbb{B}}_0,\widetilde{\mathbb{B}}_1 \right\}\ket{\psi}}$. First, note that we can write
    \begin{align}
      \norm{\left\{\widetilde{\mathbb{B}}_0,\widetilde{\mathbb{B}}_1 \right\}\ket{\psi}}&=\norm{\left(\widetilde{\mathbb{B}}_0\widetilde{\mathbb{B}}_1 - \mathbb{A}_1\mathbb{A}_0\right)\ket{\psi} + \left(\widetilde{\mathbb{B}}_1\widetilde{\mathbb{B}}_0 - \mathbb{A}_0\mathbb{A}_1\right)\ket{\psi} + \left\{\mathbb{A}_0,\mathbb{A}_1\right\} \ket{\psi}}.\label{anticommutatorBB-AA_mid}
    \end{align}
    The first term on the right-hand side above can be written as
    \begin{align}
      \norm{\left(\widetilde{\mathbb{B}}_0\widetilde{\mathbb{B}}_1 - \mathbb{A}_1\mathbb{A}_0\right) \ket{\psi}} &= \norm{\left(\widetilde{\mathbb{B}}_0\widetilde{\mathbb{B}}_1 - \widetilde{\mathbb{B}}_0\mathbb{A}_1 + \widetilde{\mathbb{B}}_0\mathbb{A}_1 - \mathbb{A}_1\mathbb{A}_0\right)\ket{\psi}},\\
      & \leqslant \norm{\widetilde{\mathbb{B}}_0}_\infty\norm{\left(\widetilde{\mathbb{B}}_1 - \mathbb{A}_1\right)\ket{\psi}} + \norm{\mathbb{A}_1}_\infty\norm{\left(\widetilde{\mathbb{B}}_0 - \mathbb{A}_0\right)\ket{\psi}}.
    \end{align}
    Because $\norm{\widetilde{\mathbb{B}}_0}_\infty=\norm{\mathbb{A}_1}_\infty=1$ and \eqref{norm_bbA-bbBtilde}, the above inequality becomes
    \begin{equation}
      \norm{\left(\widetilde{\mathbb{B}}_0 \widetilde{\mathbb{B}}_1 - \mathbb{A}_1\mathbb{A}_0\right)  \ket{\psi}}\leqslant 2^\frac{5}{4} \left(5 + \sqrt{2}\right)\sqrt{\epsilon}.
    \end{equation}
    A similar analysis provides
    \begin{equation}
      \norm{\left(\widetilde{\mathbb{B}}_1\widetilde{\mathbb{B}}_0 - \mathbb{A}_0\mathbb{A}_0\right) \ket{\psi}}\leqslant 2^\frac{5}{4} \left(5 + \sqrt{2}\right)\sqrt{\epsilon}.
    \end{equation}
    Now, we can insert the two inequalities above and \eqref{norm_anticommutator_bbA0bbA1} into \eqref{anticommutatorBB-AA_mid} to obtain
    \begin{align}
      \norm{\left\{\widetilde{\mathbb{B}}_0,\widetilde{\mathbb{B}}_1 \right\}\ket{\psi}}\leqslant 2\cdot 2^\frac{5}{4} \left(5 + \sqrt{2}\right)\sqrt{\epsilon}+2^\frac{9}{4}\left(2+\sqrt{2}\right)\sqrt{\epsilon}= 2^\frac{9}{4}\left(7+\sqrt{2}\right)\sqrt{\epsilon}.
    \end{align}
    Therefore, $\epsilon_1$ must be given by
    \begin{equation}\label{epsilon_1}
      \epsilon_1=\frac{1}{2}\max\left\{\norm{\left\{\mathbb{A}_0,\mathbb{A}_1 \right\}\ket{\psi}}, \norm{\left\{\widetilde{\mathbb{B}}_0,\widetilde{\mathbb{B}}_1 \right\}\ket{\psi}}\right\}=2^\frac{5}{4}\left(7+\sqrt{2}\right)\sqrt{\epsilon}.
    \end{equation}
    Finally, from \eqref{epsilon_1} and \eqref{epsilon_2}, we obtain
    \begin{equation}\label{f_unitaries}
      f_\mathrm{unitaries}=\frac{11\epsilon_1 + 5\epsilon_2}{2} = \frac{179+49\sqrt{2}}{2^\frac{3}{4}}\sqrt{\epsilon} \sim 147.6 \sqrt{\epsilon}.
    \end{equation}
    By inserting \eqref{bound_Delta} and \eqref{f_unitaries} into \eqref{self-testing_non_unitary} we finished the proof.

  \end{proof}

  \end{document}